


\documentclass[final,1p,times,authoryear]{elsarticle}


\usepackage{amssymb}
\usepackage{url}
\usepackage{color}
\usepackage{rotating}
\usepackage{rotfloat}
\usepackage{amsmath}
\usepackage{amsthm}
\newtheorem{thm}{Theorem}
\newdefinition{defn}{Definition}
\usepackage{graphicx}
\usepackage{mathtools,bbm}


\journal{Journal of Theoretical Biology}

\begin{document}

\begin{frontmatter}

\title{Conditions for duality between fluxes and concentrations in biochemical networks}
\author[lcsb]{Ronan M.T. Fleming\corref{cor1}}
\ead{ronan.m.t.fleming@gmail.com}

\author[adobe]{Nikos Vlassis}
\ead{nikos.vlassis@gmail.com}

\author[lcsb]{Ines Thiele}
\ead{ines.thiele@gmail.com}

\author[stanford]{Michael A. Saunders}
\ead{saunders@stanford.edu}

\cortext[cor1]{Corresponding author.}
\address[lcsb]{Luxembourg Centre for Systems Biomedicine, University of Luxembourg,
7 avenue des Hauts-Fourneaux, Esch-sur-Alzette, Luxembourg.}

\address[adobe]{Adobe Research, 345 Park Ave, San Jose, CA, USA.}

\address[stanford]{Dept of Management Science and Engineering,
  Stanford University, Stanford, CA, USA.}


\begin{abstract}
Mathematical and computational modelling of biochemical networks is
often done in terms of either the concentrations of molecular species
or the fluxes of biochemical reactions. When is mathematical modelling
from either perspective equivalent to the other? Mathematical duality
translates concepts, theorems or mathematical structures into other
concepts, theorems or structures, in a one-to-one manner. We present
a novel stoichiometric condition that is necessary and sufficient
for duality between unidirectional fluxes and concentrations. Our
numerical experiments, with computational models derived from a range
of genome-scale biochemical networks, suggest that this flux-concentration
duality is a pervasive property of biochemical networks. We also provide
a combinatorial characterisation that is sufficient to ensure flux-concentration
duality. That is, for every two disjoint sets of molecular species,
there is at least one reaction complex that involves species from
only one of the two sets. When unidirectional fluxes and molecular
species concentrations are dual vectors, this implies that the behaviour
of the corresponding biochemical network can be described entirely
in terms of either concentrations or unidirectional fluxes.
\end{abstract}

\begin{keyword}
     biochemical network
\sep flux
\sep concentration
\sep duality
\sep kinetics



\end{keyword}

\end{frontmatter}






\section{Introduction}

Systems biochemistry seeks to understand biological function in terms
of a network of chemical reactions. Systems biology is a broader field,
encompassing systems biochemistry, where understanding is in terms
of a network of interactions, some of which may not be immediately
identifiable with a particular chemical or biochemical reaction. Mathematical
and computational modelling of biochemical reaction network dynamics
is a fundamental component of systems biochemistry. Any genome-scale
model of a biochemical reaction network will give rise to a system
of equations with a high-dimensional state variable, e.g., there are
at least 1000 genes in \emph{Pelagibacter ubique} \citep{giovannoni_genome_2005},
the smallest free-living microorganism currently known. In
order to ensure that mathematical and computational modelling remains
tractable at genome-scale, it is important to focus research effort
on the development of robust algorithms with time complexity that
scales well with the dimension of the state variable.

Given some assumptions as to the dynamics of a biochemical network,
a mathematical model is defined in terms of a system of equations.
Characterising the mathematical properties of such a system of equations
can lead directly or indirectly to insightful biochemical conclusions.
Directly, in the sense that the recognition of the mathematical property
has direct biochemical implications, e.g., the correspondence between
an extreme ray of the steady state (irreversible) flux cone and the
minimal set of reactions that could operate at steady state \citep{schuster_general_2000}.
Or indirectly, in the sense of an algorithm tailored to exploit a
recognised property, which is subsequently implemented to derive biochemical
conclusions from a computational model, e.g., robust flux balance
analysis algorithms \citep{robustFBA} applied to investigate codon
usage in an integrated model of metabolism and macromolecular synthesis
in \emph{Escherichia coli} \citep{Thiele:2009d}. 

Mathematical duality translates concepts, theorems
or mathematical structures into other concepts, theorems or structures
in a one-to-one manner. Sometimes, recognition of mathematical duality
underlying a biochemical network modelling problem enables the dual
problem to be more efficiently solved. An example of this is the
problem of computing \emph{minimal cut sets},
i.e., minimal sets of reactions whose deletion will block the operation
of a specified objective in a steady state model of a biochemical
network \citep{klamt_minimal_2004}. Previously, computation of minimal
cut sets required enumeration of the extreme rays of part of the steady
state (irreversible) flux cone, which is computationally complex in
memory and processing time \citep{haus_computing_2008}. By recognising
that minimal cut sets in a primal network are dual to extreme rays
in a dual network \citep{ballerstein_minimal_2012}, one can compute
select subsets of extreme rays for the dual network that correspond
to minimal cut sets with the certain desired properties in the primal
(i.e., original) biochemical network in question \citep{von_kamp_enumeration_2014}.
This fundamental work has many experimental biological applications,
including metabolic engineering \citep{mahadevan_genome-scale_2015}.

Recognition of mathematical duality in a biochemical
network modelling problem can have many theoretical biological applications,
in advance of experimental biological applications. For example, in
mathematical modelling of biochemical reaction networks, there has
long been an interest in the relationship between models expressed
in terms of molecular species concentrations and models expressed
in terms of reaction fluxes. When concentrations or \emph{net}
fluxes are considered as independent variables, a duality between
the corresponding Jacobian matrices has been demonstrated \citep{jamshidi_flux-concentration_2009}.
In this case, the concentration and net flux Jacobian matrices can
be used to estimate the dynamics of the same system, with respect
to perturbations to concentrations or net fluxes about a given steady
state. The primal (concentration) Jacobian and dual (net flux) Jacobian
matrices are identical, except that one is the transpose of the
other. Matrix transposition is a one-to-one mapping and the aforementioned
duality is between the pair of Jacobians. This does not mean
that the net flux and concentration vectors are dual variables in
the same mathematical sense, and neither are the perturbations to
concentrations or net fluxes. This is because the Jacobian duality
\citep{jamshidi_flux-concentration_2009}, which exists for any stoichiometric
matrix, does not enforce a one-to-one mapping between concentrations
and net fluxes unless the stoichiometric matrix is invertible, which
is never the case for a biochemical network \citep{heinrich_metabolic_1978}.

Herein we ask and answer the question: what conditions are necessary
and sufficient for duality between unidirectional fluxes and molecular
species concentrations? We establish a necessary linear algebraic
condition on reaction stoichiometry in order for duality to hold.
We also
combinatorially characterise this stoichiometric condition in a manner
amenable to interpretation for biochemical networks in general. In
manually curated metabolic network reconstructions, across a wide
range of species and biological processes, we confirm satisfaction
of this stoichiometric condition for the major subset of molecular
species within each reconstruction of a biochemical network. Furthermore,
we demonstrate how linear algebra can be applied to test for satisfaction
of this stoichiometric condition or to identify the molecular
species involved in violation of this condition. We also demonstrate
that violation of flux-concentration duality points to discrepancies
between a reconstruction and the underlying biochemistry, thereby
establishing a new stoichiometric quality control procedure to select
a subset of a biochemical network reconstruction for use in computational
modelling of steady states. 

First, we establish a linear algebraic condition
and a combinatorial condition for duality between
unidirectional fluxes and concentrations. Subsequently, we introduce
a procedure to convert a reconstruction into a computational model
in a quality-controlled manner. We then apply this procedure to a
range of genome-scale metabolic network reconstructions and test for
the linear algebraic condition for flux-concentration duality before
and after conversion into a model. We conclude with a broad discussion,
with examples illustrating how a recognition of flux-concentration
duality could help address questions of biological relevance and improve
our understanding of biological phenomena.

\section{Results }

\subsection{\label{sub:Stoichiometry-=000026-elementary}Stoichiometry and reaction
kinetics}

We consider a biochemical network with $m$ molecular species and
$n$ (net) reactions. Without loss of generality with respect to genome-scale
biochemical networks, we assume $m\leq n$. We assume that each reaction
is \emph{reversible} \citep{lewis1925new} and can be represented
by a \emph{unidirectional reaction} pair. With respect to the forward
direction, in a \emph{forward stoichiometric matrix} $F\in\mathbb{\mathbb{Z}}^{m\times n}$,
let $F_{ij}$ be the \emph{stoichiometry} of molecule $i$ participating
as a substrate or catalyst in \emph{forward unidirectional reaction}
$j$. Likewise, with respect to the reverse direction, in a \emph{reverse
stoichiometric matrix} $R\in\mathbb{\mathbb{Z}}^{m\times n}$, let
$R_{ij}$ be the stoichiometry\emph{ }of molecule $i$ participating
as a substrate or catalyst in \emph{reverse unidirectional reaction}
$j$. 
The set of molecular species that jointly participate
as either substrates or products in a single unidirectional reaction
is referred to as a \emph{reaction complex}. 

One may define the topology of a \emph{hypergraph} of reactions with
a \emph{net stoichiometric matrix} $S:=R-F$. However, a catalyst,
by definition, participates in a reaction with the same stoichiometry
as a substrate or product ($F_{ij}=R_{ij})$, so the corresponding
row of $S$ is all zeros unless that catalyst is synthesised or
consumed elsewhere in the same biochemical network, as is the case
for many biochemical catalysts \citep{thiele2009gcr}. For
example, consider the $i$th molecular species acting as a catalyst
in some reactions. If it is synthesised in the $j$th reaction of
a biochemical network, the stoichiometric coefficient in the forward
stoichiometric matrix will be less than that of the forward stoichiometric
matrix ($F_{ij}<R_{ij}$), so $S_{ij}:=R_{ij}-F_{ij}>0$. This
also encompasses the case of an auto-catalytic reaction.

Before proceeding, some comments on our assumptions
are in order. One may derive $S$ from $F$ and $R$, but the
latter pair of matrices cannot, in general, be derived from $S$ because
$S$ omits the stoichiometry of catalysis. The orientation of the hypergraph,
i.e., the assignment of one direction to be forward (substrates
$\rightharpoonup$ products),
with the other reverse, is typically made so that net flux is forward
(with positive sign) when a reaction is active in its biologically
typical direction in a biochemical network. This
is an arbitrary convention rather than a constraint, and reversing
the orientation of one reaction only exchanges one column of $F$
for the corresponding one in $R$.  Although every chemical reaction
is in principle reversible, in a biochemical setting, due to physiological
limits on the relative concentrations of reactants and substrates,
some reactions are practically irreversible \citep{noor_consistent_2013}.
Our conclusions also extend to systems of irreversible
reactions because the reaction complexes for an irreversible reaction
are the same as those for a reversible reaction.

In the following, the exponential or natural logarithm of a vector
is meant component-wise, with $\exp(\log(0)):=0$. Let $v_{f}\in\mathbb{R}_{>0}^{n}$
and $v_{r}\in\mathbb{R}_{>0}^{n}$ denote forward and reverse unidirectional
reaction rate vectors. We assume that the rate of a unidirectional
reaction is proportional to the product of the concentrations of each
substrate or catalyst, each to the power of their respective stoichiometry
in that unidirectional reaction \citep{wilhelmy1850Kinetics}, with
linear proportionality given by strictly positive\emph{ rate coefficients}
$k_{f},k_{r}\in\mathbb{R}_{>0}^{n}$. Therefore we have 
\begin{equation}
\begin{array}{c}
v_{f}(c):= \exp(\ln(k_{f})+F^{T}\ln(c)),\\
v_{r}(c):= \exp(\ln(k_{r})+R^{T}\ln(c)),
\end{array}\label{eq:elementaryKinetics}
\end{equation}
where $c\in\mathbb{R}_{\geq0}^{m}$ are molecular species concentrations.
Strictly, it is not proper to take the logarithm of a unit that has
physical dimensions, so $c$ should be termed a vector of mole fractions
rather than concentrations \citep[Eq.~19.93]{physChemBRR}, but
safe in the knowledge that we have taken this liberty, we continue
in terms of concentrations. 

If the $j$th columns of $F$ and $R$ represent the stoichiometry
of an \emph{elementary reaction}, then the respective $j$th unidirectional
reaction rate is given by an elementary kinetic rate law in (\ref{eq:elementaryKinetics}).
In biochemical modelling, often it is \emph{composite reaction} stoichiometry
that is represented, in which case the unidirectional reaction rates
are given by pseudo-elementary kinetic rate laws. We shall revisit
this point in discussion, but for now it suffices to mention that,
in principle, all composite reactions can be decomposed into a set
of elementary reactions following elementary reaction kinetics \citep{cook2007eka},
even allosteric reactions \citep{bray_conformational_2004}.
With respect to the forward direction of an elementary reaction, the
term \emph{reaction complex} implies a corresponding physical association
between substrate molecular species. For the sake of simplicity, we
also use the term reaction complex for composite reactions, as if
there were a corresponding simultaneous physical association of all
substrates, which is generally not the case because composite reactions
occur as a set of elementary reaction steps.

With respect to time, the deterministic rate of change of concentration
is
\begin{eqnarray}
\frac{dc}{dt} & := & (R-F)(v_{f}(c)-v_{r}(c)),\label{eq:established_dcdt}\\
 & = & ([R,\,F]-[F,\,R])\left[\begin{array}{c}
v_{f}(c)\\
v_{r}(c)
\end{array}\right],
\end{eqnarray}
where $v_{f}(c)-v_{r}(c)$ gives a vector of net reaction rates,
$\left[\cdot,\,\cdot\right]$ denotes the horizontal concatenation operator,
and $:=$ denotes ``is defined to be equal to''.  Time-invariant fluxes
or concentrations satisfy (\ref{eq:established_dcdt}) with ${dc}/{dt}:=0$.
Define $k:=\left[\begin{array}{c}
k_{f}\\
k_{r}
\end{array}\right]\in\mathbb{R}_{>0}^{2n}$ to be given constants, then
consider the \emph{flux function} 
\begin{equation}
v(c):=\exp(\ln(k)+[F,\,R]^{T}\ln(c))=\left[\begin{array}{c}
v_{f}(c)\\
v_{r}(c)
\end{array}\right]\label{eq:fluxFunction}
\end{equation}
with a concentration vector $c$ the only argument. Apart from (a)
our deliberate distinction between unidirectional and net stoichiometry,
(b) our deliberate use of matrix-vector notation, and (c) our deliberate
use of component-wise exponential and logarithm, the expression for
unidirectional rate in (\ref{eq:fluxFunction}) is a standard representation
of deterministic elementary reaction kinetics.

\subsection{\label{sub:Ample-stoichiometry}Linear algebraic characterisation
of flux-concentration duality}

Herein, duality is defined as a one-to-one relationship between two
variable vectors. We now establish a linear algebraic
condition for duality between unidirectional flux and concentration
vectors.  This linear algebraic condition is a well
known result in mathematics, but to our knowledge its application
to establish duality between unidirectional flux and molecular species
concentration is novel.

\begin{thm}
\label{thm:duality}Assume we are given constants $k\in\mathbb{R}_{>0}^{2n}$
and $F,R\in\mathbb{\mathbb{Z}}_{\geq0}^{m\times n}$. Suppose a unidirectional
reaction flux vector $v\in\mathbb{R}_{>0}^{2n}$ and a molecular species
concentration vector $c\in\mathbb{R}_{>0}^{m}$ satisfy 
\begin{equation}
v=\exp(\ln(k)+[F,\,R]^{T}\ln(c)).\label{eq:fluxFunctionInstance}
\end{equation}
Then $\textrm{rank}([F,\,R])=m$ is a necessary and sufficient
condition for duality between fluxes and concentrations.
\end{thm}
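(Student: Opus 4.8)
The plan is to linearise the flux function by taking a component-wise logarithm, thereby reducing the claimed duality to the injectivity of an affine map whose linear part is $[F,\,R]^T$. Writing $x:=\ln(c)\in\mathbb{R}^{m}$, $y:=\ln(v)\in\mathbb{R}^{2n}$ and $b:=\ln(k)\in\mathbb{R}^{2n}$, equation (\ref{eq:fluxFunctionInstance}) is equivalent to the affine relation $y=b+[F,\,R]^{T}x$. Since the component-wise exponential and logarithm are mutually inverse bijections between $\mathbb{R}_{>0}$ and $\mathbb{R}$, and since $c$ ranges over $\mathbb{R}_{>0}^{m}$ exactly as $x$ ranges over all of $\mathbb{R}^{m}$, the correspondence $c\leftrightarrow v$ defined by (\ref{eq:fluxFunctionInstance}) is one-to-one if and only if the correspondence $x\leftrightarrow y$ defined by $y=b+[F,\,R]^{T}x$ is one-to-one, i.e.\ if and only if the affine map $x\mapsto b+[F,\,R]^{T}x$ is injective. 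The standard fact to invoke is then that an affine map with linear part $[F,\,R]^{T}$ is injective precisely when $[F,\,R]^{T}$ has trivial null space, i.e.\ precisely when $[F,\,R]^{T}$ has full column rank $m$, which is $\textrm{rank}([F,\,R])=m$.

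For the sufficiency direction I would spell this out constructively: if $\textrm{rank}([F,\,R])=m$ then $[F,\,R][F,\,R]^{T}$ is an invertible $m\times m$ matrix, so $L:=([F,\,R][F,\,R]^{T})^{-1}[F,\,R]$ is a left inverse of $[F,\,R]^{T}$, and from any flux vector $v$ obeying (\ref{eq:fluxFunctionInstance}) the concentration vector is recovered uniquely via $\ln(c)=L(\ln(v)-\ln(k))$, hence $c=\exp(L(\ln(v)-\ln(k)))$. This exhibits the $c\leftrightarrow v$ relationship explicitly in both directions on the range of the flux function.

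For the necessity direction I would argue the contrapositive: if $\textrm{rank}([F,\,R])<m$ then $[F,\,R]^{T}$ has a nontrivial null space, so there is $z\in\mathbb{R}^{m}$ with $z\neq 0$ and $[F,\,R]^{T}z=0$. Then for any $c\in\mathbb{R}_{>0}^{m}$ the two distinct concentration vectors $c$ and $c':=\exp(\ln(c)+z)\in\mathbb{R}_{>0}^{m}$ satisfy $[F,\,R]^{T}\ln(c')=[F,\,R]^{T}\ln(c)$ and hence produce the same $v$ through (\ref{eq:fluxFunctionInstance}); the correspondence is therefore not one-to-one and duality fails.

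I do not expect a genuine obstacle here; the only point requiring care — rather than difficulty — is making precise what ``duality'' means. Because $m\leq n<2n$, the flux function cannot be surjective onto $\mathbb{R}_{>0}^{2n}$, so ``one-to-one relationship between fluxes and concentrations'' must be read as a bijection between $\mathbb{R}_{>0}^{m}$ and the image of the flux function (an affine subspace of dimension $\textrm{rank}([F,\,R])$ in logarithmic coordinates), not onto the whole positive orthant. Once this is fixed, the argument is exactly the elementary linear-algebra equivalence ``injective $\Leftrightarrow$ full column rank'', transported across the $\exp/\ln$ bijection, together with the identity $\textrm{rank}([F,\,R]^{T})=\textrm{rank}([F,\,R])$.
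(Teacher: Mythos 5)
Your proof is correct and follows essentially the same route as the paper: take component-wise logarithms to reduce (\ref{eq:fluxFunctionInstance}) to the affine relation $\ln(v)-\ln(k)=[F,\,R]^{T}\ln(c)$, so that unique recovery of $c$ from $v$ holds if and only if $[F,\,R]^{T}$ has trivial null space, i.e.\ $\textrm{rank}([F,\,R])=m$. Your explicit left-inverse construction, the null-vector counterexample for necessity, and the remark that duality means a bijection onto the image of the flux function are just more detailed renderings of the paper's argument and its surrounding discussion of injectivity versus bijectivity.
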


\begin{proof}
That $v$ is uniquely defined given $c$ is trivial. Taking the logarithm
of both sides of (\ref{eq:fluxFunctionInstance}), we have $\ln(v)-\ln(k)=[F,\,R]^{T}\ln(c)$.
Then, if and only if $\textrm{rank}([F,\,R])=m$ is $\ln(c)$,
and therefore $c$, uniquely defined given $v$.
\end{proof}

Theorem \ref{thm:duality} establishes that the flux function (\ref{eq:fluxFunction})
is an injective function. It is not bijective because one can always
find a $v$ such that $\ln(v)-\ln(k)$ is not in the range of $[F,\,R]^{T}$.
Note that the exponential function is bijective,
but if one wished to consider other flux functions, it would be sufficient
to replace the exponential function with another injective function
and Theorem \ref{thm:duality} would still hold.

We now proceed to interpret this stoichiometric condition for duality
in biochemical terms. Consider the following triplet of isomerisation
reactions involving three molecular species:
\[
A\rightleftharpoons B,\quad B\rightleftharpoons C,\quad C\rightleftharpoons A.
\]
The forward, reverse and net stoichiometric matrices are

\begin{equation}
F=\left[\begin{array}{ccc}
1 & 0 & 0\\
0 & 1 & 0\\
0 & 0 & 1
\end{array}\right],\quad R=\left[\begin{array}{ccc}
0 & 0 & 1\\
1 & 0 & 0\\
0 & 1 & 0
\end{array}\right],\quad(R-F)=\left[\begin{array}{rrr}
\!\!-1 & 0 & 1\\
1 & -1 & 0\\
0 & 1 & -1
\end{array}\right],\label{eq:ampleStoichiometryExample}
\end{equation}
where flux and concentration vectors are dual vectors because
$\textrm{rank}([F,\,R])=3=m$.
 Consider the following quartet of reactions involving four representatives
of supposedly distinct molecular species:
\[
A\rightleftharpoons B+C,\quad A\rightleftharpoons D,\quad B+C\rightleftharpoons D,\quad A+D\rightleftharpoons2B+2C.
\]
The forward, reverse and net stoichiometric matrices are 
\begin{equation}
F=\left[\begin{array}{cccc}
1 & 1 & 0 & 1\\
0 & 0 & 1 & 0\\
0 & 0 & 1 & 0\\
0 & 0 & 0 & 1
\end{array}\right],\quad R=\left[\begin{array}{cccc}
0 & 0 & 0 & 0\\
1 & 0 & 0 & 2\\
1 & 0 & 0 & 2\\
0 & 1 & 1 & 0
\end{array}\right],\quad(R-F)=\left[\begin{array}{rrrr}
\!\!-1 & -1 & 0 & -1\\
1 & 0 & -1 & 2\\
1 & 0 & -1 & 2\\
0 & 1 & 1 & -1
\end{array}\right],\label{eq:degenerateStoichiometryExample}
\end{equation}
where flux and concentration vectors are \emph{not} dual vectors because
$\textrm{rank}([F,\,R])=3<4=m$. Observe that the second and third
rows of $F$ and $R$ are positive multiples of one another. This
corresponds to a pair of supposedly distinct molecules, $B$ and $C$,
that are always either produced or consumed together with fixed relative
stoichiometry. This is an ambiguous model of reaction stoichiometry
because either (i) $B$ and $C$ are actually the same molecular species
and therefore the extra row is superfluous, or (ii) $B$ and $C$
are different molecular species but the model is missing
some reaction that would demonstrate they are synthesised or consumed
in distinct reactions.

\subsection{\label{sub:Combinatorial-characterisation-o}Combinatorial characterisation
of flux-concentration duality}

The aforementioned linear algebraic condition for
duality between unidirectional flux and concentration vectors is hard
to interpret in terms of reaction complex stoichiometry. Therefore
we sought a characterisation that would be easier to interpret in
a (bio)chemically interpretable manner.  Here we derive a combinatorial
characterisation of the condition $\textrm{rank}([F,\,R])=m$,
which holds independently of the actual values of the stoichiometric
coefficients. Our analysis draws from the theory of L-matrices and
zero/sign patterns \citep{hershkowitz1993ranks,brualdi2009matrices}.
First we introduce some definitions and notation.

\begin{defn}
\label{dfn:support}(Support of a set of vectors) Let $\mathcal{C}$
be a collection of $d$-dimensional row vectors. The support of $\mathcal{C}$
is defined to be the subset of $\mathcal{I}:=\{1,\ldots,d\}$ such
that, for each $i$ in the given subset of $\mathcal{I}$, there exists
at least one vector in $\mathcal{C}$ whose $i$th component is
nonzero. 
\end{defn}

For example, if $\mathcal{C}$ is formed by the last two
rows of the matrix 
\[
\left[\begin{array}{cccccc}
1 & 0 & 1 & 1 & 0 & 0\\
0 & 1 & 0 & 0 & 1 & 0\\
0 & 1 & 1 & 0 & 0 & 1
\end{array}\right],
\]
the support of $\mathcal{C}$ is $\{2,3,5,6\}$. If $\mathcal{C}$
is formed by the first and third columns of the matrix, its support
is $\{1,3\}$.

\begin{defn}
\label{dfn:combinatorialindependence}(Combinatorial independence)
A collection $\mathcal{C}$ of row vectors (of equal dimension) is
said to be combinatorially independent if $\mathcal{C}$ does not
contain the zero vector and \emph{every} two nonempty disjoint subsets
of $\mathcal{C}$ have different supports. 
\end{defn}

In the above example, the rows of the matrix are combinatorially independent.
However, the columns of this matrix are not combinatorially independent
because the support of columns $\{1,2\}$ is $\{1,2,3\}$, which is
the same as the support of columns $\{3,5\}$.

\begin{defn}
\label{dfn:zeropattern}(Zero pattern) The zero pattern of a real
matrix $A$ is the $(0,1)$-matrix obtained by replacing each nonzero
entry of $A$ by $1$.
\end{defn}

\begin{thm}
\label{thm:richman78}(Combinatorial independence and rank\textbf{
}\citep[Lemma (5.2)]{richman1978singular}) Let $P$ be an $m\times d$
zero pattern. Every non-negative matrix with zero pattern $P$ has
rank $m$ if and only if the rows of $P$ are combinatorially independent. 
\end{thm}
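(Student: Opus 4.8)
The plan is to prove the two directions of the equivalence separately, in each case by reasoning about linear relations among the rows of a non-negative matrix $A$ that \emph{realises} $P$, i.e.\ satisfies $A_{ij}>0$ whenever $P_{ij}=1$ and $A_{ij}=0$ whenever $P_{ij}=0$. The guiding observation, which I would record first, is a dictionary between dependent rows and coincident supports: if $x^{T}A=0$ for some nonzero $x$, split the row index set into the disjoint sets $I:=\{i:x_{i}>0\}$ and $J:=\{i:x_{i}<0\}$ and read off column $\ell$ of the identity $\sum_{i\in I}x_{i}A_{i\ell}=\sum_{i\in J}(-x_{i})A_{i\ell}$. Since $A$ is non-negative with the sign pattern of $P$, the left-hand sum is strictly positive exactly when $\ell$ belongs to the support of the $P$-rows indexed by $I$, and likewise on the right for $J$; equality across all columns $\ell$ therefore forces those two supports to coincide.

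For the ``if'' direction I would assume the rows of $P$ are combinatorially independent (so in particular $P$ has no zero row) and suppose, for contradiction, that some realisation $A$ has dependent rows, giving $x\neq 0$ with $x^{T}A=0$ and the sets $I,J$ as above. The delicate sub-step is to check that \emph{both} $I$ and $J$ are nonempty: if, say, $J=\emptyset$ then $x\geq 0$, $x\neq 0$, and choosing $i$ with $x_{i}>0$ and a column $\ell$ in which row $i$ of $A$ is nonzero (possible because $P$ has no zero row) makes $(x^{T}A)_{\ell}$ a sum of non-negative terms with at least one strictly positive, a contradiction; symmetrically $I\neq\emptyset$. This is exactly where non-negativity and the absence of zero rows enter. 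The dictionary then exhibits disjoint nonempty $I,J$ with equal support, contradicting combinatorial independence, so every realisation has rank $m$.

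For the ``only if'' direction I would prove the contrapositive by constructing an explicit rank-deficient realisation. If some row of $P$ is zero then every realisation has a zero row and hence rank below $m$; otherwise there are disjoint nonempty $I,J\subseteq\{1,\dots,m\}$ whose $P$-rows share a common support $T$, and I would build $A$ as follows: for $i\in I$ set $A_{i\ell}=1/p_{\ell}$ at each position with $P_{i\ell}=1$, where $p_{\ell}$ is the number of rows of $I$ that are nonzero in column $\ell$; for $k\in J$ set $A_{k\ell}=1/q_{\ell}$ analogously; fill every remaining position with $P_{i\ell}=1$ by $1$; and set all other entries to $0$. This $A$ is non-negative with zero pattern exactly $P$, and column by column $\sum_{i\in I}A_{i\ell}$ and $\sum_{k\in J}A_{k\ell}$ both equal $1$ for $\ell\in T$ and $0$ otherwise, so $\sum_{i\in I}A_{i\cdot}-\sum_{k\in J}A_{k\cdot}=0$ is a nontrivial linear relation among the rows and $\mathrm{rank}\,A<m$.

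I expect the main obstacle to be this last construction: the point is not the existence of \emph{a} linear dependence but producing one carried by a matrix whose zero pattern is \emph{exactly} $P$, which is why the entries must be normalised within $I$ and within $J$ so that the two partial row-sums agree in every column while every prescribed-nonzero position stays strictly positive. The ``if'' direction is largely mechanical once the support/dependence dictionary is available, the only subtle point being the non-emptiness of both sign classes, which leans on non-negativity together with combinatorial independence ruling out zero rows.
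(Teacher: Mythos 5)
Your proof is correct. Note, however, that the paper itself gives no proof of this statement: it is quoted verbatim from Richman and Schneider (Lemma 5.2 of their 1978 paper), so there is no in-paper argument to compare against; what you have done is reconstruct a self-contained proof of the cited lemma. Both directions check out. The ``dictionary'' step is sound: for $x^{T}A=0$ with $I=\{i:x_{i}>0\}$, $J=\{i:x_{i}<0\}$, non-negativity of $A$ makes $\sum_{i\in I}x_{i}A_{i\ell}$ strictly positive exactly on the support of the $I$-rows (which coincides with the support of the corresponding rows of $P$, since $A$ realises $P$), and likewise for $J$; and you correctly isolate the one delicate point, that combinatorial independence forbids zero rows and hence forces both sign classes to be nonempty, which is where the hypothesis of non-negativity genuinely enters (for general sign patterns this is exactly where the argument, and the theorem, would fail). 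For the converse, your explicit realisation with entries $1/p_{\ell}$ on $I$, $1/q_{\ell}$ on $J$, and $1$ elsewhere does have zero pattern exactly $P$ (note $p_{\ell},q_{\ell}\ge 1$ for every $\ell$ in the common support $T$, so no prescribed-nonzero entry is lost), and the unweighted row sums over $I$ and over $J$ both equal the indicator of $T$, giving the nontrivial dependence $\sum_{i\in I}A_{i\cdot}-\sum_{k\in J}A_{k\cdot}=0$; together with the trivial zero-row case this covers every way combinatorial independence can fail. So the argument is complete and is essentially the standard proof of this kind of zero-pattern (L-matrix) rank characterisation.
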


Conversely, it follows that if any two disjoint subsets of rows of
$P$ have the same support, then $P$ is row rank-deficient. For example,
the matrix 
\[
\left[\begin{array}{cccccc}
1 & 1 & 0 & 0 & 0 & 0\\
0 & 0 & 1 & 1 & 0 & 0\\
0 & 0 & 0 & 0 & 1 & 1\\
1 & 0 & 0 & 0 & 1 & 0\\
0 & 1 & 1 & 0 & 0 & 0\\
0 & 0 & 0 & 1 & 0 & 1
\end{array}\right]
\]
is row rank-deficient because rows $\{1,2,3\}$ and rows $\{4,5,6\}$
have the same support $\{1,2,3,4,5,6\}$. Theorem \ref{thm:richman78}
permits us to state the following.

\begin{thm}
\label{crl:corollary} (Combinatorial independence and duality) Consider
a family of biochemical networks that share the same zero pattern
as $[F,\,R]$. Assume that each molecular species participates in at
least one reaction in each network in the family. Then, for each network
in the family, the following are equivalent:
\begin{enumerate}
\item The matrix $[F,\,R]$ has full row rank. 
\item For every two disjoint sets of molecular species, there is at least
one reaction complex that involves species from only one of the two
sets.
\item Unidirectional flux and concentration are dual variables.
\end{enumerate}
\end{thm}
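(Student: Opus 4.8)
The plan is to establish the three-way equivalence by combining the two results already in hand: Theorem~\ref{thm:duality} delivers $(1)\Leftrightarrow(3)$ almost immediately, while Theorem~\ref{thm:richman78} delivers $(1)\Leftrightarrow(2)$ once condition~(2) has been recast as combinatorial independence of the rows of the zero pattern of $[F,\,R]$.

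First I would dispatch $(1)\Leftrightarrow(3)$. Each network in the family has a flux function of the form~(\ref{eq:fluxFunctionInstance}), with its own positive constants $k\in\mathbb{R}^{2n}_{>0}$ and its own non-negative integer matrices $F,R$ sharing the prescribed zero pattern; and ``$[F,\,R]$ has full row rank'' means precisely $\mathrm{rank}([F,\,R])=m$ since $m\le n\le 2n$. Theorem~\ref{thm:duality} asserts exactly that this rank condition is necessary and sufficient for $v$ and $c$ to be dual, so $(1)\Leftrightarrow(3)$ for every network in the family.

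The substantive step is $(1)\Leftrightarrow(2)$. Let $P$ denote the common zero pattern of $[F,\,R]$: an $m\times 2n$ $(0,1)$-matrix with rows indexed by molecular species and columns indexed by the $2n$ unidirectional reactions. Under this dictionary a reaction complex is the support of a column of $P$ (the species taking part in that unidirectional reaction), and, by Definition~\ref{dfn:support} applied to rows, the support of a species set $I$ is the set of unidirectional reactions in which some species of $I$ participates. I would then prove that condition~(2) is equivalent to the rows of $P$ being combinatorially independent (Definition~\ref{dfn:combinatorialindependence}), arguing by contraposition on each side. If~(2) fails, there are disjoint nonempty species sets $S_1,S_2$ such that no reaction complex involves species from only one of them; that is, every column of $P$ meets $S_1$ if and only if it meets $S_2$, which is exactly $\mathrm{supp}(S_1)=\mathrm{supp}(S_2)$, so the rows of $P$ are combinatorially dependent. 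Conversely, combinatorial dependence of the rows of $P$ furnishes---the standing hypothesis that no species is isolated ruling out a zero row---disjoint nonempty sets of rows $I_1,I_2$ with $\mathrm{supp}(I_1)=\mathrm{supp}(I_2)$, and then $S_1:=I_1$, $S_2:=I_2$ witnesses the failure of~(2). Finally I would invoke Theorem~\ref{thm:richman78}: because $F,R\ge 0$ throughout the family, combinatorial independence of the rows of $P$ is equivalent to every non-negative matrix with zero pattern $P$ having rank $m$, i.e.\ to condition~(1), which---since $P$ is common to the family---then holds across it. Chaining $(2)\Leftrightarrow(1)\Leftrightarrow(3)$ closes the argument.

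I expect the only genuine obstacle to be the translation of condition~(2): reading ``a reaction complex that involves species from only one of the two sets'' as ``a column of $P$ whose support meets exactly one of $S_1,S_2$'', checking that its negation over all disjoint pairs is precisely the equal-support condition of Definition~\ref{dfn:combinatorialindependence}, and clearing the degenerate cases---empty sets, repeated reaction complexes, species in neither set---together with the bookkeeping role of the no-isolated-species hypothesis. The remaining steps are direct appeals to Theorems~\ref{thm:duality} and~\ref{thm:richman78}.
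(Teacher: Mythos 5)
Your proposal is correct and follows essentially the same route as the paper, which presents this theorem as a direct consequence of Theorems \ref{thm:duality} and \ref{thm:richman78} without a separate written proof: (1)$\Leftrightarrow$(3) comes from Theorem \ref{thm:duality}, and (1)$\Leftrightarrow$(2) comes from Theorem \ref{thm:richman78} once condition (2) is translated into combinatorial independence of the rows of the zero pattern, with the no-isolated-species hypothesis ruling out zero rows. Your reading of condition (1) at the level of the whole family (all non-negative matrices sharing the zero pattern) is exactly the reading the paper intends, since combinatorial independence is what guarantees full row rank independently of the values of the nonzero stoichiometric coefficients.
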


Equivalently, for a given biochemical network with matrix $[F,\,R]$,
if condition 2 in Theorem \ref{crl:corollary} is true, then one can
exchange \emph{any} positive stoichiometric coefficient of the network
with \emph{any} positive value and $[F,\,R]$ will still have full
row rank. The above result provides a combinatorial characterisation
of the condition for flux-concentration duality, which holds independent
of the values of the stoichiometric coefficients. This is analogous
to results involving L-matrices for problems such as the structural
controllability of systems \citep{brualdi2009matrices}.

\subsubsection{Testing for combinatorial independence}

According to Theorem \ref{thm:richman78}, to test if an $m\times d$
zero pattern has rank $m$, we can equivalently test whether its $m$
rows are combinatorially independent. Can this test be performed
efficiently?  In general (unless P=NP) the answer is no  
~\citep{klee1984signsolvability}, as
the problem of testing if a \emph{sign pattern} 
(elements $\{0,1,-1\}$) has full row rank is NP-complete.  Their
proof relies on a reduction from the 3-SAT problem, which is known to
be NP-complete \citep{garey1979computers}. In their proof they
construct a \emph{non-negative} sign pattern (which is a zero
pattern), and therefore their result applies to our case too. Hence we
have the following.

\begin{thm}
\label{thm:test}\textbf{ }(Testing combinatorial independence\textbf{
\citep{klee1984signsolvability}}) Let $P$ be a zero pattern. Testing
if the rows of $P$ are combinatorially independent is NP-complete. 
\end{thm}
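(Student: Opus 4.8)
The plan is to obtain Theorem \ref{thm:test} as a corollary of the NP-completeness of L-matrix recognition established by \citet{klee1984signsolvability}, transported to zero patterns by means of Theorem \ref{thm:richman78}. Recall that an $m\times d$ sign pattern is called an \emph{L-matrix} when every real matrix having that sign pattern has rank $m$, and that \citet{klee1984signsolvability} prove L-matrix recognition to be NP-complete by a polynomial-time many-one reduction from 3-SAT \citep{garey1979computers}. First I would quote that reduction only through its input/output behaviour: from a 3-SAT formula $\varphi$ it builds, in time polynomial in $|\varphi|$, a sign pattern $P_\varphi$ that is an L-matrix exactly when the appropriate answer (satisfiable, or unsatisfiable) to $\varphi$ holds.

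The one step that must be \emph{checked} rather than merely invoked is that the patterns $P_\varphi$ arising from this construction may be taken entry-wise non-negative, with entries in $\{0,1\}$, so that each $P_\varphi$ is a zero pattern in the sense of Definition \ref{dfn:zeropattern}. Given this, Theorem \ref{thm:richman78} applies verbatim: every non-negative matrix with zero pattern $P_\varphi$ has rank $m$ if and only if the rows of $P_\varphi$ are combinatorially independent. Hence the same map $\varphi\mapsto P_\varphi$ is a polynomial-time reduction of 3-SAT (or of its complement) to the problem of testing combinatorial independence of the rows of a zero pattern, which supplies the required hardness.

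For the matching membership, I would note from Definition \ref{dfn:combinatorialindependence} that a zero pattern $P$ \emph{fails} to be combinatorially independent exactly when it has a zero row, or two nonempty disjoint subsets of its rows share the same support; a zero row, or such a pair of subsets, is a certificate of polynomial size that is verifiable in polynomial time by forming and comparing supports. Thus the decision problem for combinatorial (in)dependence lies at the first level of the polynomial hierarchy, and together with the reduction above it is complete there, which is what Theorem \ref{thm:test} asserts.

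I expect the only genuine obstacle to be the verification flagged in the second paragraph, namely that the 3-SAT gadget of \citet{klee1984signsolvability} can be realised with $(0,1)$-valued pattern matrices; everything else is a direct application of Theorem \ref{thm:richman78} together with a short-certificate argument. In the present paper this non-negativity is asserted without detail (``In their proof they construct a non-negative sign pattern''); a self-contained treatment would either pin down the specific lemma of \citet{klee1984signsolvability} whose variable- and clause-gadgets are already non-negative, or reproduce those gadgets and re-derive that the rank of the assembled pattern drops below $m$ precisely on the intended assignments.
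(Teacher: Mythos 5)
Your proposal is correct and follows essentially the same route as the paper: it transports the Klee--Ladner--Manber 3-SAT hardness for (non-negative) sign patterns to zero patterns and then invokes Theorem \ref{thm:richman78} to rephrase full row rank as combinatorial independence, exactly as the paper does, with the non-negativity of the gadget taken from the cited construction. Your added certificate argument for membership (and the care about NP versus coNP phrasing) is a detail the paper leaves implicit, but it does not change the approach.
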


However, as we prove next, when the zero pattern is constrained to
have at most two non-negative entries per column, the testing for
combinatorial independence can be done in polynomial time. To our
knowledge, this result is new.

\begin{thm}
\label{thm:testconstrained}(Testing combinatorial independence in
constrained zero patterns) Let $P$ be a zero pattern with at most
two 1s per column. Testing if the rows of $P$ are combinatorially
independent can be done in polynomial time.
\end{thm}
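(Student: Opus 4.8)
The plan is to reduce the statement to a simple test on an associated graph. Since the entries of $P$ lie in $\{0,1\}$ and each column has at most two $1$s, every column of $P$ is either the zero vector, a unit vector $e_i$, or a sum $e_i+e_j$ with $i\neq j$; zero columns never belong to the support of any set of rows, so we discard them. I would form a graph $G$ on vertex set $\{1,\dots,m\}$ (one vertex per row of $P$), placing an edge $\{i,j\}$ for every column equal to $e_i+e_j$, and call a vertex $i$ \emph{tokened} if some column of $P$ equals $e_i$. Constructing $G$ and the set of tokened vertices takes time linear in the number of nonzeros of $P$. The goal is then to prove the characterisation: the rows of $P$ are combinatorially independent if and only if (a)~$P$ has no zero row and (b)~no connected component of $G$ is simultaneously nontrivial (i.e.\ contains an edge), bipartite, and free of tokened vertices. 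Granting this, the theorem follows at once, since (a) is checked by inspection and for (b) the connected components of $G$ and their bipartiteness are computed by breadth-first search in time $O(|V|+|E|)$, after which testing each component for an edge and for a tokened vertex is trivial.

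To prove the characterisation, note that a zero row is exactly the zero vector occurring in the collection, so assume (a) holds; it remains to show that (b) is equivalent to the nonexistence of two nonempty disjoint sets of rows $A,B$ with $\mathrm{supp}(A)=\mathrm{supp}(B)$. Suppose such $A,B$ exist. If some $i\in A$ were tokened by a column $e_i$, that column would lie in $\mathrm{supp}(A)$ but not in $\mathrm{supp}(B)$ since $i\notin B$; hence no vertex of $A\cup B$ is tokened. The column for an edge $\{u,v\}$ lies in $\mathrm{supp}(A)$ exactly when $\{u,v\}\cap A\neq\emptyset$, so equality of supports forces every edge to meet $A$ if and only if it meets $B$. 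An edge with both endpoints in $A$ would meet $A$ but not $B$, so $A$ — and symmetrically $B$ — is an independent set; moreover each edge incident to $A$ has its other endpoint in $B$, i.e.\ $N(A)\subseteq B$, and likewise $N(B)\subseteq A$. Alternating along edges from any vertex of $A$, connectivity then forces each component $C$ of $G$ that meets $A\cup B$ to lie entirely inside $A\cup B$, to be bipartite, and to have $(A\cap C,\,B\cap C)$ as its bipartition; such a $C$ must contain an edge, for otherwise it is a single untokened isolated vertex, i.e.\ a zero row, contradicting (a). Hence (b) fails.

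Conversely, if (b) fails, pick such a component $C$ with bipartition $(X,Y)$ and set $A:=X$ and $B:=Y$: these are disjoint and, because $C$ has an edge, nonempty. Every column then contributes identically to $\mathrm{supp}(A)$ and $\mathrm{supp}(B)$: an edge-column inside $C$ has exactly one endpoint in each of $X$ and $Y$; and every edge-column outside $C$, as well as every token-column, involves only vertices outside $C=A\cup B$. Therefore $\mathrm{supp}(A)=\mathrm{supp}(B)$, completing the equivalence.

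The graph construction and breadth-first search are routine; the one delicate step I anticipate is the structural description of disjoint equal-support pairs — proving that equality of supports forces $A$ and $B$ to be the two sides of a union of nontrivial bipartite token-free components — where the roles of the unit-vector (token) columns and of all-zero rows must be tracked carefully. As a remark, combined with Theorem~\ref{thm:richman78} this characterisation also yields a rank formula: the generic rank of a non-negative matrix whose columns have at most two nonzero entries is $m$ minus the number of nontrivial bipartite token-free components of $G$ (an all-zero row counting as one further unit of rank deficiency), in agreement with the classical value of the rank of an unsigned graph incidence matrix.
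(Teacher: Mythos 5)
Your proof is correct and follows essentially the same route as the paper: view the columns of $P$ as edges of a graph on the rows and reduce the test to connected-component and bipartiteness checks, which are polynomial. In fact your version is more complete than the paper's terse argument, since you explicitly handle single-$1$ (token) columns, zero columns and zero rows, and you prove both directions of the graph characterisation, whereas the paper compresses these points into a ``without loss of generality'' assumption and a one-directional sketch.
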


\begin{proof}
Without loss of generality we can assume that each column of $P$
has exactly two nonzero entries. We view the matrix $P$ as the incidence
matrix of an undirected graph, where each row of $P$ is a vertex
and each column is an edge. Combinatorial dependence of the rows of
$P$ would imply the existence of two disjoint sets of rows with the
same support, which would imply the existence of a connected component
of the graph that is bipartite (2-colorable). Finding all connected
components of a graph and bipartiteness testing are classical graph
problems that can be solved in polynomial time \citep{Cormen2009}.
\end{proof}

Since most reconstructed biochemical networks are in terms of composite
reactions, the corresponding $[F,\,R]$ may have more than two nonzero
entries per column and the nonzero stoichiometric coefficients may
differ from $1$. However, every composite reaction is a composition
of a set of elementary reactions \citep{cook2007eka}, each with at
most three reactants per reaction, so the resulting \emph{bilinear}
$[F,\,R]$ will have at most two nonzero entries per column. It is
possible to algorithmically convert any composite reaction into a
set of elementary reactions, with at most two nonzero entries per
column, by creating faux molecular species representing a reaction
intermediate, e.g., the composite reaction $A+B\rightleftharpoons C+D$
may be decomposed into $A+B\rightleftharpoons E$ and $E\rightleftharpoons C+D$. Reaction
intermediates are typically not identical for two enzyme-catalysed
composite reactions, suggesting that flux-concentration duality is
a pervasive property of biochemical networks in general.

\subsection{\label{sub:exist_Flux-concentration-duality}Flux-concentration duality
in existing genome-scale biochemical networks}

Section \ref{sub:Combinatorial-characterisation-o} provided a biochemically
interpretable condition, in terms of molecular species involvement
in reaction complex stoichiometry, that implies flux-concentration
duality for an arbitrary network. We now show that flux-concentration
duality is a pervasive property of quality-controlled models derived
from genome-scale biochemical network reconstructions. Testing for
combinatorial independence is computationally complex, so instead
we rely on linear algebra to test the rank of $[F,\,R]$.
As detailed below, we converted 29 genome-scale metabolic network
reconstructions into computational models, then compared the number
of molecular species with the rank of $[F,\,R]$ before
and after conversion. These metabolic reconstructions were all manually
curated and represent a wide range of different species (see Supplementary
Table 1). 

It is important to distinguish a network reconstruction from a computational
model of a biochemical network. The former may contain incomplete
or inconsistent knowledge of biochemistry, while the latter must
satisfy certain modelling assumptions, represented by mathematical
conditions, in order to ensure that the model is a faithful representation
of the underlying biochemistry. This modelling principle is already
well established in the digital circuit modelling community, and some
of the associated \emph{model checking} algorithms have been applied
to biochemical networks \citep{carrillo_overview_2012}, especially
by the community that use Petri-nets to model biochemical networks,
e.g., \citep{soliman_invariants_2012}. The application of modelling
assumptions is a key step in the conversion of a reconstruction into
a computational model. We now introduce these assumptions, their mathematical
representation, and their relationship to the rank of $[F,\,R]$.
For the sake of simplicity, the toy examples given to illustrate key
concepts only involve reactions with two or less reactants, but the
theory presented also applies to systems of composite reactions involving
three or more reactants.

\subsubsection{Stoichiometric consistency}

All biochemical reactions conserve mass; therefore it is essential
in a model that each reaction, which is supposed to represent a biochemical
reaction, does actually conserve mass. Although it is not essential
to do so \citep{fleming2011exist}, reactions that do not conserve
mass are often added to a network reconstruction \citep{thieleTests}
in order to represent the flow of mass into and out of a system, e.g.,
during flux balance analysis \citep{Pal06}. Every reaction that does
not conserve mass, but is added to a model in order represent the
exchange of mass across the boundary of a biochemical system, is henceforth
referred to as an \emph{exchange reaction,} e.g., $D\rightleftharpoons\emptyset$,
where $\emptyset$ represents null. When checking for reactions that
do not conserve mass, we must first omit exchange reactions. 

Besides exchange reactions, a reconstruction may contain reactions
with incompletely specified stoichiometry or molecules with incompletely
specified chemical formulae, because of (for instance) limitations
in the available literature evidence. While stoichiometrically inconsistent
biochemical reactions may appear in a reconstruction, they should
be omitted from a computational model derived from that reconstruction,
especially if the model is to be used to predict flow of mass, else
erroneous predictions could result. One approach is to require that
chemical formulae be collected for each molecule during the reconstruction
process \citep{Thorleifsson2011}, then omit non-exchange reactions
that are elementally imbalanced \citep{cobraV2}. A complementary
approach is to detect reactions that are specified in a \emph{stoichiometrically
inconsistent} manner \citep{gevorgyan2008detection}. For instance,
the reactions $A+B\rightleftharpoons C$ and $C\rightleftharpoons A$
are stoichiometrically inconsistent because it is impossible to assign
a positive molecular mass to all species whilst ensuring that each
reaction conserves mass. 

A set of stoichiometrically consistent reactions is mathematically
defined by the existence of at least one $\ell\in\mathbb{R}_{>0}^{m}$
such that $R^{T}\ell=F^{T}\ell$, equivalently $S^{T}\ell=(R-F)^{T}\ell=0$,
where $\ell$ is a vector of the molecular mass of $m$ molecular
species.  Consider the aforementioned stoichiometrically
inconsistent example, where the corresponding stoichiometric matrices
are
\[
S:=R-F=\left[\begin{array}{cc}
0 & 1\\
0 & 0\\
1 & 0
\end{array}\right]-\left[\begin{array}{cc}
1 & 0\\
1 & 0\\
0 & 1
\end{array}\right]=\left[\begin{array}{rr}
\!\!-1 & 1\\
\!\!-1 & 0\\
1 & -1
\end{array}\right],
\]
with rows from top to bottom corresponding to molecular species
$A,B,C$. Let $a,b,c\in\mathbb{R}$ denote the molecular mass of $A,B,C$.
We require $a,b,c$ such that
\[
R^{T}\ell=\left[\begin{array}{ccc}
0 & 0 & 1\\
1 & 0 & 0
\end{array}\right]\left[\begin{array}{c}
a\\
b\\
c
\end{array}\right]=\left[\begin{array}{c}
c\\
a
\end{array}\right]=\left[\begin{array}{c}
a+b\\
c
\end{array}\right]=\left[\begin{array}{ccc}
1 & 1 & 0\\
0 & 0 & 1
\end{array}\right]\left[\begin{array}{c}
a\\
b\\
c
\end{array}\right]=F^{T}\ell.
\]
However, the only solution requires $a=c$ and $b=0$, i.e., a zero mass for
the molecule $B$, which is inconsistent with chemistry; therefore
the reactions $A+B\rightleftharpoons C$ and $C\rightleftharpoons A$
are stoichiometrically inconsistent.  In general, given $F$ and $R$,
one may check for stoichiometric consistency \citep{gevorgyan2008detection}
by solving the optimisation problem 
\[
\begin{aligned}\underset{\ell}{\max} & \;\left\Vert \ell\right\Vert _{0}\\
\text{s.t.} & \;S^{T}\ell=0,\\
 & \;0\leq\ell.
\end{aligned}
\]
Here, $\left\Vert \ell\right\Vert _{0}$ denotes the zero-norm or
equivalently the cardinality (number of non-zero entries) of $\ell$.
However, maximising the cardinality of a non-negative vector in the
left nullspace of $S$ is a problem that is challenging to solve exactly.
This problem has been represented as a mixed-integer linear optimisation
problem \citep{gevorgyan2008detection}, but since algorithms for
such problems have unpredictable computational complexity, we implemented
a novel and more efficient approach.

The cardinality of a non-negative vector is a quasiconcave
(or unimodal) function \citep{boyd_convex_2004}. The problem of maximising
this particular quasiconcave function, subject to a convex constraint,
may be approximated by a linear optimisation problem \citep{vlassis_fast_2014},
in our case the problem
\begin{equation}
\begin{aligned}\underset{z,\,\ell}{\max} & \;\mathbbm{1}^{T}z\\
\text{s.t.} & \;S^{T}\ell=0,\\
            & \;z \leq \ell,\\
            & \;0\le z \le \mathbbm{1}\alpha,\\
            & \;0\le \ell \le \mathbbm{1}\beta,
\end{aligned}
\label{eq:maxCardAApprox}
\end{equation}
where $z,\ell\in\mathbb{R}^{m}$ and $\mathbbm{1}$ denotes an all ones
vector. In this approximation, we maximise the sum over all dummy
variables $z_i$, $i=1,\dots,m$, but it is $\ell_i$ that represents the
stoichiometrically consistent molecular mass of the $i$th molecule.
The scalars $\alpha, \beta \in\mathbb{R} >0$ are proportional to the
smallest molecular mass considered non-zero and the largest molecular
mass allowed. An upper bound on the largest molecular mass avoids the
possibility of a poorly scaled optimal $\ell$. We used
$\alpha=10^{-4}$ and $\beta=10^{4}$ as all models tested were of
metabolism, so eight orders of magnitude between the least and most
massive metabolite is sufficient. As this approximation is based on
linear optimisation, it can be implemented numerically in a scalable
manner. We applied \eqref{eq:maxCardAApprox} to each reconstruction in
Supplementary Table 1 in order to identify stoichiometrically
inconsistent rows. That is, if $\ell^{\star}$ denotes the optimal
$\ell$ obtained from \eqref{eq:maxCardAApprox} then the $i$th row
is stoichiometrically inconsistent if
$\ell^{\star}_{i}<\alpha$. Stoichiometrically inconsistent rows and
the corresponding columns were omitted from further analyses. Where
molecular formulae were available, we confirmed that all retained
biochemical reactions were elementally balanced, as expected.  To
reiterate, in our numerical check of rank $[F,\,R]$, discussed below,
all rows correspond to metabolite species involved in
stoichiometrically consistent reactions, with the exception of
exchange reactions.

\subsubsection{Net flux consistency}

If one assumes that all molecules are at steady state, the corresponding
computational model should be \emph{net flux consistent,} meaning
that each net reaction of the network has a nonzero flux in at least
one feasible steady state net flux vector. Due to incomplete biochemical
knowledge, a reconstruction may contain \emph{net flux inconsistent
}reactions that do not admit a nonzero steady state net flux. For
example, consider the set of reactions
\begin{equation}
\emptyset\rightleftharpoons D\rightleftharpoons G\rightleftharpoons\emptyset,
\qquad D\rightleftharpoons H.\label{eq:fluxConsistentExample}
\end{equation}
In this set, the reaction $D\rightleftharpoons H$ is net flux inconsistent,
as any nonzero net flux is inconsistent with the assumption that the
concentration of $C$ should be time invariant.  Inclusion
of net flux inconsistent reactions, like $D\rightleftharpoons H$,
in a dynamic model would be perfectly reasonable, but we omit
such reactions because the focus of this paper is on modelling of steady
states.

Let $B\in\mathbb{R}^{m\times p}$ denote the stoichiometric matrix
for a set of $p$ exchange reactions. We say a matrix $S$ is net
flux consistent if there exist matrices $V\in\mathbb{R}^{n\times k}$
and $W\in\mathbb{R}^{p\times k}$ such that
\begin{eqnarray*}
SV & = & -BW,
\end{eqnarray*}
where each row of $V$ and each row of $W$ contains at least one
nonzero entry.  Consider the aforementioned net flux
inconsistent example, where the corresponding stoichiometric matrices
are 
\[
S=\left[\begin{array}{rr}
\!\!-1 & -1\\
1 & 0\\
0 & 1
\end{array}\right],\qquad
B=\left[\begin{array}{rr}
1 & 0\\
0 & -1\\
0 & 0
\end{array}\right].
\]
Let $p,q,r,s\in\mathbb{R}$ denote the net rate of the reactions,
from left to right in \eqref{eq:fluxConsistentExample}. We require
$p,q,r,s$ such that
\[
SV=\left[\begin{array}{rr}
\!\!-1 & -1\\
1 & 0\\
0 & 1
\end{array}\right]\left[\begin{array}{c}
p\\
q
\end{array}\right]=\left[\begin{array}{c}
\!\!-p-q\\
q\\
q
\end{array}\right]=\left[\begin{array}{c}
\!\!-r\\
s\\
0
\end{array}\right]=\left[\begin{array}{rr}
\!\!-1 & 0\\
0 & 1\\
0 & 0
\end{array}\right]\left[\begin{array}{c}
r\\
s
\end{array}\right]=-BW.
\]
However, the only solution requires $q=0$, i.e., a zero net flux
through the reaction $D\rightleftharpoons H$, corresponding to a zero
row of $V$; therefore this reaction is net flux inconsistent. Our
definition of net flux consistency is weaker than the assumption
that all reactions admit a nonzero net flux simultaneously, which
would be equivalent to requiring a single net flux vector with all
nonzero entries, i.e., $k=1$. It is also weaker than the assumption
of net flux consistency subject to bounds on the direction of reactions
\citep{vlassis_fast_2014}, which we do not impose here. Enforcing
net flux consistency requires omission of any net reaction that cannot
carry a non-zero net flux at a steady state.

Within \textsc{fastCORE}, a scalable algorithm for reconstruction of compact
and context-specific biochemical network models \citep{vlassis_fast_2014},
a key step employs linear optimisation as described above (\ref{eq:maxCardAApprox})
to identify the largest set of net flux consistent reactions in a
given model. We created a computational model from the stoichiometrically
consistent subset of each reconstruction in Supplementary Table 1.
We allowed all reactions to be reversible (lower and upper bounds
$-1000$ and $1000$), included exchange reactions in
each reconstruction, and then identified and omitted all net flux
inconsistent reactions ($v_{j}<\epsilon=10^{-4}$). We also omitted
the corresponding rows, where a molecular species is only involved
in flux inconsistent reactions. Therefore, in our check of rank($[F,\,R]$),
all rows correspond to metabolite species
involved in net flux consistent reactions. As Supplementary Table
1 illustrates, this is typically a subset of the stoichiometrically
consistent rows.

\subsubsection{Unique and non-trivial molecular species}

In a reconstruction, one may find a pair of rows in $S$ that are
identical up to scalar multiplication. As these extra rows typically
represent inadvertent duplication of an identical molecular species,
any such duplicate rows were omitted. Likewise, we omitted any row
with all zeros, e.g., corresponding to a metabolite that was only
involved in stoichiometrically inconsistent or net flux inconsistent
reactions. Hereafter, any biochemical network without zero rows or
rows identical up to scalar multiplication we refer to as being \emph{non-trivial}.

\subsubsection{\label{sub:Pervasive-flux-concentration-dua}Pervasive flux-concentration
duality}

\begin{sidewaysfigure}
\includegraphics[width=1\textwidth]{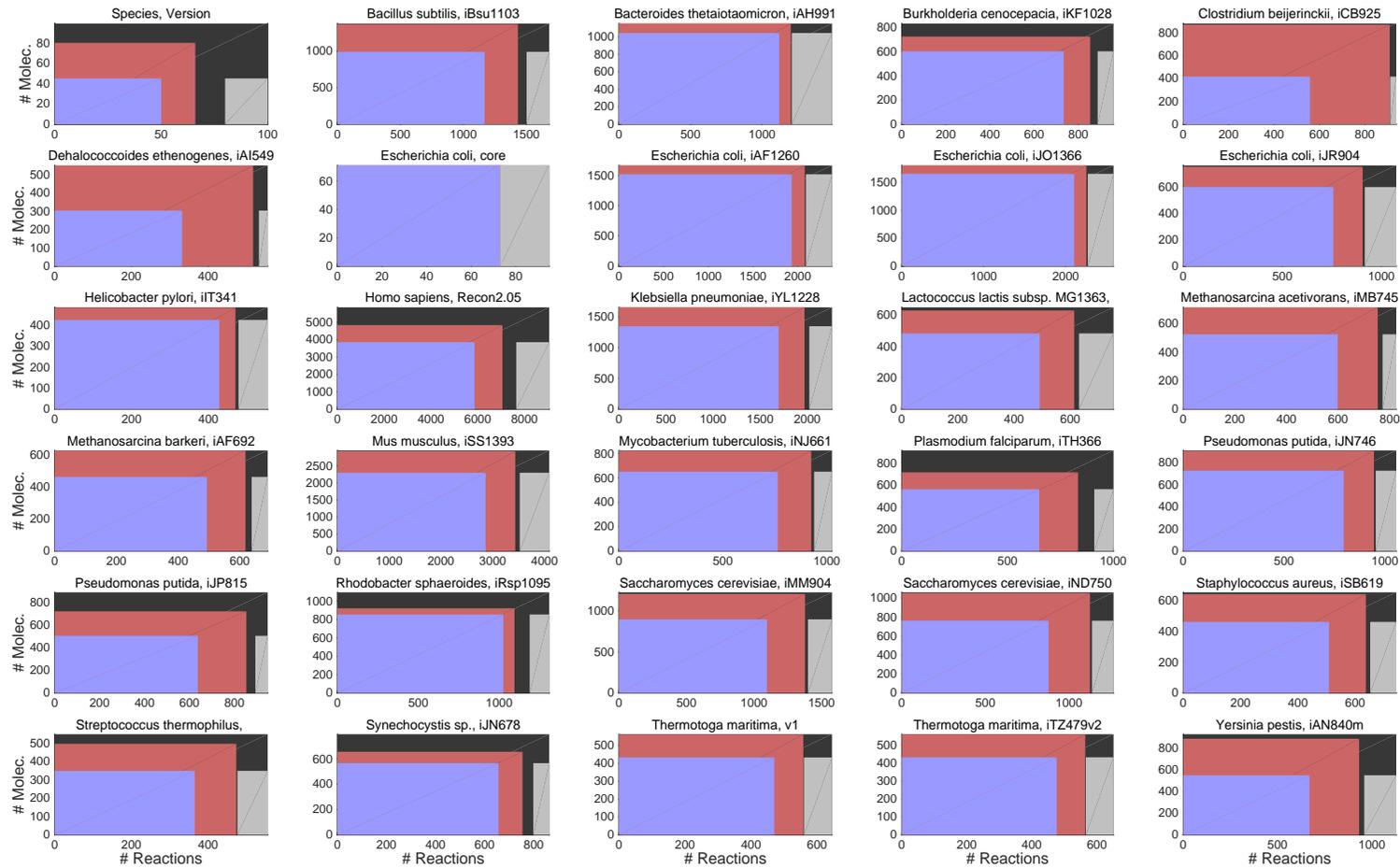}

\protect\caption{\label{fig:FRfillPlot}Usually, only a subset of a reconstruction
will satisfy the mathematical conditions imposed when a corresponding
computational model is generated. The original size of $[S,\,S_{e}]$
(outer black rectangle) varies across the 29 reconstructions tested.
Due to exchange reactions, only a subset of the columns of a reconstruction
correspond to stoichiometrically consistent rows (red rectangles).
If a molecular species is exclusively involved in exchange reactions,
the number of stoichiometrically consistent rows is less than the
number of rows of reconstruction. Due to reactions that do not admit
a nonzero steady state net flux, only a subset of mass balanced reactions
and a subset of exchange reactions are also flux consistent (blue
and grey rectangles, respectively). When $F$ and $R$ are derived
from a subset of a genome-scale biochemical network reconstruction,
assuming no zero rows of $[F,\,R]$ and no rows that are identical
up to scalar multiplication, stoichiometric and net flux consistency
is often but not always sufficient to ensure that $[F,\,R]$ has full
row rank.}
\end{sidewaysfigure}

We investigated the stoichiometric properties of a representative
subset of published metabolic network reconstructions. Specifically,
numerical experiments were performed on 29 published reconstructions
where a Systems Biology Markup Language \citep{Keating2006} compliant
Extensible Markup Language (.xml) file was available and at least
90\% of the molecular species corresponded to stoichiometrically consistent
rows.  Numerical linear algebra was used to compute
matrix rank (cf. Supplementary File 1, Section \ref{sub:Rank-Computation}).
The results are summarised in Figure \ref{fig:FRfillPlot} and provided
in detail in Supplementary File 2. All numerical experiments may be reproduced
with the {\sc Matlab} code distributed with the COBRA Toolbox at
\url{https://github.com/opencobra/cobratoolbox} (cf. Supplementary File 1, Section \ref{matlabres}).

The number of (possibly indistinct)
molecular species is, by definition, equivalent to the number of rows
of $S:=R-F$ derived directly from the reconstruction, without additional
assumptions. By forming $[F,\,R]$ directly from a reconstruction,
we found that $\textrm{rank}([F,\,R])$ is usually (21/29)
less than the number of rows of $S$, with some (8/29) exceptions,
e.g., the genome-scale reconstruction of the metabolic network of\emph{
Rhodobacter sphaeroides}, iRsp1095 \citep{21777427}. 

Most genome-scale reconstructions (26/29) were accompanied by chemical
formulae for the majority of reactions. If the number of stoichiometrically
consistent rows is less than the number of molecules exclusively involved
in reactions that are supposed to be elementally balanced, as determined
by a check for elemental balance, then at least one chemical formula
for a molecular species must be incorrectly specified. In only 3 of
the 26 reconstructions that supplied chemical formulae, this issue
was apparent (cf.~Supplementary file 1). Each reconstruction was converted
into a computational model where $F,\,R\in\mathbb{R}_{\geq0}^{m\times n}$
satisfy the following conditions:
\begin{enumerate}
\item All rows of $S:=R-F$ correspond to molecular species in stoichiometrically
consistent reactions, with the exception of exchange reactions.
\item No two rows in $[F,\,R]$ are identical up to scalar multiplication. 
\item All rows of $S$ correspond to molecular species in net flux consistent
reactions, assuming all reactions are reversible, including exchange
reactions.
\item No row of $[F,\,R]$ is all zeros.
\end{enumerate}

Of the 29 reconstructions subjected to the aforementioned conditions,
26 generated a model where $[F,\,R]$ had full row rank.
When $[F,\,R]$ was row rank-deficient, the rank was never
more than three less than the number of rows of $[F,\,R]$.
In each case, the rank-deficiency was a result of omitted biochemical
reactions that would otherwise have resulted in an $[F,\,R]$
with full row rank. A typical example of a genome-scale reconstruction
with row rank-deficient $[F,\,R]$ is highlighted in Section
\ref{sub:Combinatorial-dependence-in}.  In general,
should a row rank-deficient $[F,\,R]$ arise, there are
two options: (i) further manual reconstruction effort to correctly
specify reaction network stoichiometry, or (ii) omission of the dependent
molecular species from any derived kinetic model. 

Although conditions 2 and 4 are trivial and clearly necessary, neither
of conditions 1 or 3 (stoichiometric consistency or net flux
consistency) is necessary for $[F,\,R]$ to have full row rank. For
almost one third (8/29) of the reconstructions, one could form
$[F,\,R]$ without any further assumptions and yet $[F,\,R]$ had full
row rank. For instance, the genome-scale \emph{Methanosarcina
  acetivorans C2A} metabolic model (iMB745
\citep{Benedict:2012:J-Bacteriol:22139506}) has 715 molecular species
and without stoichiometric or net flux consistency being imposed,
$\textrm{rank}([F,\,R])=715$, even though this is 2 greater than the
number of stoichiometrically consistent rows of $S$.

When a stoichiometrically inconsistent row of $S$ is omitted from a metabolic model, the corresponding row of the biomass reaction is also omitted. This reduction in the number of constraints could lead to an increase in the maximum biomass synthesis rate. In contrast, removal of net flux inconsistent reactions might reduce the maximum biomass synthesis rate or render biomass synthesis infeasible. Flux balance analysis of each of the 29 genome-scale reconstructions before and after application of the aforementioned four conditions revealed that growth feasibility was not extinguished and tended to increase (data not shown). Further iterations of reconstruction and model validation would be required for each model derived in the manner described above prior to use in applications. In particular, one should check that each omitted reaction is balanced for each atomic element and conduct further literature research to resolve flux inconsistent reactions that contributed toward optimal biomass synthesis in models derived from reconstructions without the aforementioned quality control steps.

\section{Discussion}

Any net stoichiometric matrix $S\in\mathbb{R}^{m\times n}$ may be
derived by taking the difference between a pair of forward and reverse
stoichiometric matrices $F,R\in\mathbb{R}_{\geq0}^{m\times n}$, that
is $S:=R-F$. The horizontal concatenation $[F,\,R]\in\mathbb{R}^{m\times2n}$
is a key mathematical object that appears in the deterministic, elementary,
unidirectional reaction kinetic rate equation $v=\exp(\ln(k)+[F,\,R]^{T}\ln(c))$,
relating concentrations $c\in\mathbb{R}^{m}$ and rate coefficients
$k\in\mathbb{R}^{2n}$ to fluxes $v\in\mathbb{R}^{2n}$. We address
the question: When does there exist a one-to-one relationship between
concentrations and fluxes?

We have proven that, given rate coefficients, there is a one-to-one
relationship between concentrations and fluxes if and only if $[F,\,R]$
has full row rank. Furthermore, this dual relationship exists if and
only if there are no two disjoint sets of molecular species where
every corresponding unidirectional reaction involves at least one
molecular species from each of the disjoint sets. Flux-concentration
duality implies that one could discuss biochemistry either entirely
in terms of fluxes or entirely in terms of concentrations, as both
would be different perspectives on the same biochemical system. This
has clear implications when interpreting biochemical network function
from the perspective of either concentrations or fluxes.

Within a wide range of non-trivial biochemical network reconstructions,
including metabolism and signalling networks, we observe from numerical
experiments that together, stoichiometric and net flux consistency
of $S$ is often sufficient to ensure that $[F,\,R]$ has
full row rank. After application of these conditions we occasionally
observe that $[F,\,R]$ is row rank-deficient and this
is due to omission of reactions from the corresponding reconstruction.
Finding a numerical example where $[F,\,R]$ is row rank-deficient
does not reduce the biochemical significance of our observations if
the underlying network is not biochemically realistic. In each particular
case, it was clear that row rank-deficiency $[F,\,R]$
was due to the omission of known biochemical reactions that would
have given $[F,\,R]$ full row rank. It is easy to test
if $[F,\,R]$ has full row rank for a particular network,
but it is a rather abstract linear algebraic condition, so it is not
easy to see if it applies to biochemical networks in general. Therefore,
we sought a complementary characterisation of full-row-rank $[F,\,R]$
that was applicable in general and more easily interpretable from
a biochemical network perspective.

We have established biochemically interpretable combinatorial conditions
that are necessary and sufficient for $[F,\,R]$ to have
full row rank dependent only on the sparsity pattern of $F$ and $R$;
that is, independent of the actual values of their nonzero entries.
However, in practice these combinatorial conditions may be too strong,
because for any given biochemical network, the values of the nonzero
entries are fixed and the corresponding $[F,\,R]$ may
have full row rank, even if combinatorial independence of its rows
does not hold. Combinatorial independence of the rows of a given $[F,\,R]$
implies full row rank, but in general, the reverse implication does
not hold. In Section \ref{sub:exist_Flux-concentration-duality},
we applied numerical linear algebra to check the rank of $[F,\,R]$
derived from 29 reconstructions, each subject to certain conditions.
However, as the aforementioned $[F,\,R]$ all correspond
to networks of composite biochemical reactions, there exist columns
of $[F,\,R]$ with more than two nonzero entries. We do
not test for combinatorial independence of the rows of these $[F,\,R]$,
as this problem is NP-hard \citep{garey1979computers}. 

There are many interesting open problems, the solution of which would
be interesting extensions to this work. We know that all composite
reactions are defined from the composition of a set of elementary
reactions, and the latter give rise to an $[F,\,R]$ with
at most two nonzero entries in each column. Given an $[F,\,R]$
derived from a network of composite reactions, if one were to express
the network as a set of elementary reactions that properly reflects
the underlying biochemistry \citep{cook2007eka}, does the corresponding
$[F,\,R]$ also have full row rank?  One could ask the same
question starting from an elementary reaction network with an $[F,\,R]$
that has full row rank. Indeed, by Theorem \ref{thm:test}, testing
the combinatorial independence of the latter is solvable in polynomial
time. It is exciting that so many of the non-trivial, stoichiometrically
consistent and net flux consistent biochemical networks that we tested
do give rise to an $[F,\,R]$ of full row rank, despite
the fact that mathematically we know that these conditions are not
sufficient for $[F,\,R]$ to have full row rank. What are
the undiscovered, necessary, mathematical, yet biologically interpretable
conditions that ensure $[F,\,R]$ has full row rank, even
if its rows are not combinatorially independent? 

Putting this work into a broader context, one must always make a clear
distinction between a reconstruction and a model. In practice, the
latter is a numerical implementation that must satisfy certain mathematical
conditions that are usually not satisfied by every metabolite species
and every reaction in a given reconstruction. Indeed, depending on
one's combination of mathematical assumptions, one could derive many
different models from the same reconstruction. Testing for compliance
with mathematical conditions is a vital element of quality control
when converting a reconstruction into a correctly specified computational
model. Of note in this respect is the relatively low computational
complexity of the linear optimisation algorithms we use to solve the
problem of checking for stoichiometric and net flux consistency. 

Reconstruction mis-specification is often not due
to some error, especially for reconstructions that are ambitious in
scope.  Such reconstructions will inevitably contain knowledge gaps,
where the exact stoichiometry, chemical formula, etc, is unknown for
certain reactions. That is, reconstruction mis-specification is often
a reflection of incomplete biochemical knowledge. As any computational
model will only represent the subset of the metabolite species and
reactions that satisfy certain mathematical conditions, e.g., stoichiometric
consistency, one must take care to omit that part of a reconstruction
not satisfying certain conditions before generating model predictions
and absolutely before making any biological conclusions. Otherwise
grossly erroneous conclusions may be obtained. 

In applied mathematics, the development of an algorithm to find a
solution to a system of equations begins with certain assumptions
on the properties of the function(s) involved. In systems biochemistry,
deterministic modelling of molecular species concentrations gives rise
to systems of nonlinear equations, e.g., (\ref{eq:established_dcdt}),
the general mathematical properties of which are still being discovered.
Given rate coefficients, there is a paucity of scalable algorithms,
with guaranteed convergence properties, to solve large nonlinear biochemical
reaction equation systems for non-equilibrium, stationary concentrations.
Likewise for the problem of fitting optimal rate coefficients given
concentrations and a known reaction equation system. Observe that
(\ref{eq:established_dcdt}) contains the matrix $[F,\,R]$
twice and the matrix $[R,\,F]$ once. 

That $\textrm{rank}([F,\,R])=\textrm{rank}([R,\,F])=m$
is a pervasive property of biochemical networks from a diverse set
of organisms motivates the development of algorithms to exploit this
property and its consequences, e.g., \citep{artacho_accelerating_2015}.
This algorithmic development proceeds with two complementary approaches:
theory and numerical experiments.  Of particular importance
in this regard is that the set of models generated herein (with
$\textrm{rank}([R,\,F])=m)$
satisfy a common set of mathematical conditions, thereby reducing
the possibility for spurious numerical results, when numerically testing
hypothesised but unproven theorems concerning the properties of biochemical
networks in general. For instance it is known that a full row rank $[R,\,F]$ is a necessary but insufficient condition to preclude the existence of multiple positive steady states for certain chemical reaction networks \citep{muller_sign_2015}. Testing the rank of $[R,\,F]$ can be done efficiently, but it is still an open problem to design a tractable algorithm to test for the necessary and sufficient conditions to preclude the existence of multiple positive steady states for genome-scale biochemical networks \citep{muller_sign_2015}.
Numerical tests of a mathematical conjecture,
using biochemically realistic stoichiometric matrices, can be an efficient
way to find a counter-example or to provide support for the plausibility
of a conjecture. These tests help one decide where to invest
the mental effort required to attempt a proof of a conjecture. It
is important therefore that such numerical tests be conducted with
(a) a wide selection of stoichiometric matrices, in case a conjecture
holds only for certain network topologies, and (b) a set of stoichiometric
matrices that each satisfy a specified set of biochemically motivated
mathematical conditions, in case a conjecture holds only for stoichiometric
matrices corresponding to realistic biochemical networks.

\section{Conclusions}

Mathematical and computational modelling of biochemical networks is
often done in terms of either the concentrations of molecular species
or the fluxes of biochemical reactions. Mathematical modelling from
either perspective is equivalent when concentrations and unidirectional
fluxes are dual variables.  Assuming elementary kinetic
rate laws for each reaction, we show that this duality holds if and
only if the matrix $[F,\,R]\in\mathbb{R}_{\geq0}^{m\times2n}$
has full row rank, where $[F,\,R]$ is formed by concatenation
of the stoichiometric matrices $F,R\in\mathbb{R}_{\geq0}^{m\times n}$
for the $m$ reactants consumed in $n$ forward and reverse reaction
directions, respectively. Numerical experiments with computational
models derived from many genome-scale biochemical networks indicate
that flux-concentration duality is a pervasive property of biochemical
networks. For an arbitrary biochemical network, we provide a combinatorial
characterisation that is sufficient to ensure flux-concentration duality.
That is, for every two disjoint sets of molecular species, if there
is at least one reaction complex that involves species from only one
of the two sets, then duality holds. Our stoichiometric characterisation
of the conditions for duality between concentrations and unidirectional
fluxes has fundamental implications for mathematical and computational
modelling of biochemical networks. When flux-concentration duality
holds, interpretation of biochemical network function from the perspective
of unidirectional fluxes is equivalent to interpretation from the
perspective of molecular species concentrations.

\section*{Acknowledgements}

We would like to thank Michael Tsatsomeros and Francisco J. Aragon
Artacho for valuable comments. This work was funded by the Interagency
Modeling and Analysis Group, Multiscale Modeling Consortium U01
awards from the National Institute of General Medical Sciences [award
GM102098] and U.S. Department of Energy, Office of Science, Biological
and Environmental Research Program [award DE-SC0010429].  The content
is solely the responsibility of the authors and does not necessarily
represent the official views of the funding agencies.

\section*{References}
\normalsize

\bibliographystyle{elsarticle-harv-nourl}


\pagebreak{}

\normalsize

\section{Supplementary Material}
\setcounter{page}{1}

\subsection{\label{sec:Numerical-Linear-algebra}Numerical Linear algebra}

\subsubsection{\label{sub:Rank-Computation}Rank Computation}

To compute the rank of a matrix, we employed the sparse LU factorisation
package LUSOL \citep{gill_maintaining_1987,gill_snopt:_2005}. Given
a sparse matrix $A\in R^{m\times n}$, LUSOL computes factorisations
of the form 
\begin{equation}
P_{1}AP_{2}=LDU,\label{eq:LDU}
\end{equation}
where $P_{1}$ and $P_{2}$ are permutations, $L$ is lower trapezoidal
with unit diagonals, $D$ is diagonal and nonsingular, $U$ is upper
trapezoidal with unit diagonals, and the rank of each factor $L$,
$D$, $U$ is $r\le\min(m,n)$. This is LUSOL's estimate of rank($A$).

The permutations are chosen to keep $L$ and $U$ sparse, subject
to bounds on the off-diagonal elements of $L$ and $U$. Threshold
Partial Pivoting \citep{gill_maintaining_1987} requires $|L_{ij}|\le\tau$
for some tolerance $\tau\in(1,10]$, where $\tau=2$ is a reasonable
choice. Threshold Rook Pivoting \citep{gill_maintaining_1987} also
requires $|U_{ij}|\le\tau$ and is likely to be more reliable on general
sparse $A$. We have observed that net stoichiometric matrices $S=R-F$
have a sharply defined rank that can be estimated reliably by Threshold
Partial Pivoting on either $S$ or $S^{T}$ and this is cheaper than
applying Threshold Rook Pivoting. The same is true for estimating
the rank of $[F,\,R]$.

\subsubsection{\label{sub:Identification-of-dependencies}Identification of dependencies}

To investigate the rationale for row rank-deficiency of $A:=[F,\,R]$
derived directly from a reconstruction, we used Threshold Partial
Pivoting to obtain the factorisation (\ref{eq:LDU}). The row permutation
$P_{1}$ partitions the rows as 
\[
P_{1}A=\begin{bmatrix}B\\
C
\end{bmatrix},
\]
where $B\in R^{r\times n}$, $C\in R^{(m-r)\times n}$, and rank($B$)
= $r$. By definition of rank, the over-determined linear system $B^{T}W=C^{T}$
is consistent (has a solution $W$). The nonzero entries of each column
of $W$ reveal the dependencies between rows of $B$ and $C$. We
obtained $W$ by solving $\min\|B^{T}W-C^{T}\|$ using sparse QR factorisation
(\texttt{W = B'$\backslash$C';} in {\sc Matlab}).
The column permutation $P_{2}$ further partitions $A$ as 
\[
P_{1}AP_{2}=\begin{bmatrix}B_{1} & B_{2}\\
C_{1} & C_{2}
\end{bmatrix},
\]
where $B_{1}$ is $r\times r$. We could obtain $W$ more efficiently
by solving $B_{1}^{T}W=C_{1}^{T}$, where $B_{1}=L_{1}U_{1}$ is already
factorised in terms of the first $r$ rows and columns of $L$ and
$U$. The nonzero entries of $W$ reveal the dependencies between
rows of $B_{1}$ and $C_{1}$. As each row of $B_{1}$ and $C_{1}$
corresponds to a different molecular species, one can use $W$ to
investigate the biochemical rationale for dependency among rows of
$[F,\,R]$ if dependency is observed.
\pagebreak{}

\subsection{\label{sub:Combinatorial-dependence-in}Combinatorial
dependence in exceptional models derived from genome-scale reconstructions}

Combinatorial dependence among the rows of $[F,\,R]$ implies that $[F,\,R]$ is row rank deficient. However the reverse implication is not necessarily true. This is because linear dependence depends on the nonzero numerical values of elements in $[F,\,R]$ whereas combinatorial dependence depends only on the sparsity pattern of $[F,\,R]$ and not the numerical values in $[F,\,R]$. Nevertheless, it is of interest to check if an $[F,\,R]$ that is row rank deficient also contains combinatorially dependent rows.

Only 3 of the 29 reconstructions subjected to the four conditions
in Section \ref{sub:Pervasive-flux-concentration-dua} resulted in
a row rank-deficient $[F,\,R]$, with rank at most 3 lower
than the number of rows. If rank($[F,\,R]$) is less
than the number of rows of $[F,\,R]$ then numerical linear
algebra (cf. Section \ref{sub:Identification-of-dependencies}) can
be used to test for dependency between rows to identify possible reasons
for the row rank deficiency. The three models with rank-deficient
$[F,\,R]$ were from compartmentalised genome-scale models.
In each of the three models, one could find at least one dependency
between a dependent molecular species (one dependent row of $[F,\,R]$)
and a set of independent molecular species within the same sub-cellular
compartment (a set of linearly independent rows of $[F,\,R]$). 

Each dependency in $[F,\,R]$ was due to the existence
of two disjoint sets of molecular species, one having a cofactor moiety
in common and one having a non-cofactor moiety in common, with one
cofactor and one non-cofactor always present in reactant complexes
within that sub-cellular compartment. That is, neither the cofactor
moiety nor the non-cofactor moiety was synthesised or degraded within
that sub-cellular compartment. This reflected one or more reactions
that were missing from the original reconstruction that would either
synthesise or degrade the moiety within that sub-cellular compartment.
Such reactions would give $[F,\,R]$ full row rank, as
inevitably there would be at least one reaction where the cofactor
and non-cofactor moiety would not both be represented within a reaction
complex at the same time. Table~\ref{tab:exceptionExample} illustrates
a specific example of such a dependency within a model derived from
a \emph{Saccharomyces cerevisiae} reconstruction (iMM904).
Alternatively, the dependency reflected the omission of a reaction
to transport either the cofactor or non-cofactor moiety into that
sub-cellular compartment. Such a reaction would typically not simultaneously
involve both the cofactor moiety and the non-cofactor moiety rendering
$[F,\,R]$ of full row rank. 

\begin{table}
\protect\caption{\label{tab:exceptionExample}An example of a set of endoplasmic
reticulum reactions within a genome-scale \emph{Saccharomyces cerevisiae}
reconstruction (iMM904), that results in a rank deficient $[F,\,R]$
even when all molecular species are stoichiometrically consistent
and all reactions are net flux consistent ( given exchange reactions,
assuming each reaction is reversible and omitting nontrivial rows).
Besides the 6 reactions illustrated, Phytosphingosine, Sphinganine,
Tetracosanoyl-CoA, Hexacosanoyl-CoA and Phosphate are not involved
in any other reactions within the endoplasmic reticulum. Phytosphingosine
and Sphinganine form one set reactions, while Tetracosanoyl-CoA, Hexacosanoyl-CoA
and Phosphate form another set. These sets are disjoint as they do
not share a molecular species in common. Observe that the support
of both disjoint sets is identical, i.e., all reactions contain at
least one member of both sets in the same reaction complex. This renders
the corresponding rows of $[F,\,R]$ row rank deficient.
In fact, the rank of these 5 rows is 4, hence leading to row rank
deficiency of $[F,\,R]$. A more comprehensive model would
have the 3-ketodihydrosphingosine reductase reaction (Phytosphingosine
+ NADPH $\rightleftharpoons$ 3-Dehydrosphinganine + NADP) that would
result a full row rank $[F,\,R]$.}

\medskip

\begin{small}
\begin{tabular}{ll}
   R1 & alkaline ceramidase (ceramide-1)
\\ R2 & alkaline ceramidase (ceramide-1)
\\ R3 & alkaline ceramidase (ceramide-2)
\\ R4 & alkaline ceramidase (ceramide-2)
\\ R5 & sphingoid base-phosphate phosphatase (sphinganine 1-phosphatase)
\\ R6 & sphingoid base-phosphate phosphatase (phytosphingosine 1-phosphate)
\\ A1 & CERASE124er
\\ A2 & CERASE126er
\\ A3 & CERASE224er
\\ A4 & CERASE226er
\\ A5 & SBPP1er
\\ A6 & SBPP1er
\end{tabular}

\begin{center}
\begin{tabular}{|l|l|c|c|c|c|c|c|}
\hline 
 & Reaction Name & R1 & R2 & R3 & R4 & R5 & R6
\tabularnewline
\hline
 Metabolite species name & Abbreviation & A1 & A2 & A3 & A4 & A5 & A6
\tabularnewline
\hline 
\hline 
Ceramide-1 (Sphinganine:n-C24:0) & cer1\_24{[}r{]} & -1 & 0 & 0 & 0 & 0 & 0\tabularnewline
\hline 
Ceramide-1 (Sphinganine:n-C26:0) & cer1\_26{[}r{]} & 0 & -1 & 0 & 0 & 0 & 0\tabularnewline
\hline 
Ceramide-2 (Phytosphingosine:n-C24:0) & cer2\_24{[}r{]} & 0 & 0 & -1 & 0 & 0 & 0\tabularnewline
\hline 
Ceramide-2 (Phytosphingosine:n-C26:0) & cer2\_26{[}r{]} & 0 & 0 & 0 & -1 & 0 & 0\tabularnewline
\hline 
Coenzyme A & coa{[}r{]} & -1 & -1 & -1 & -1 & 0 & 0\tabularnewline
\hline 
Proton & h{[}r{]} & -1 & -1 & -1 & -1 & 0 & 0\tabularnewline
\hline 
Water & h20{[}r{]} & 0 & 0 & 0 & 0 & -1 & -1\tabularnewline
\hline 
Sphinganine 1-phosphate & sph1p{[}r{]} & 0 & 0 & 0 & 0 & -1 & 0\tabularnewline
\hline 
Phytosphingosine 1-phosphate & psph1p{[}r{]} & 0 & 0 & 0 & 0 & 0 & -1\tabularnewline
\hline 
Phytosphingosine & psphings{[}r{]} & 0 & 0 & \textbf{1} & \textbf{1} & 0 & \textbf{1}\tabularnewline
\hline 
Sphinganine & sphgn{[}r{]} & \textbf{1} & \textbf{1} & 0 & 0 & \textbf{1} & 0\tabularnewline
\hline 
Tetracosanoyl-CoA & ttccoa{[}r{]} & \textbf{1} & 0 & \textbf{1} & 0 & 0 & 0\tabularnewline
\hline 
Hexacosanoyl-CoA (n-C26:0CoA) & hexccoa{[}r{]} & 0 & \textbf{1} & 0 & \textbf{1} & 0 & 0\tabularnewline
\hline 
Phosphate & pi{[}r{]} & 0 & 0 & 0 & 0 & \textbf{1} & \textbf{1}\tabularnewline
\hline 
\end{tabular}
\end{center}
\end{small}
\end{table}

\clearpage

\subsection{\label{matlabres}Reproduction of numerical results}

All of the reconstructions and code required to reproduce
the numerical results referred to in this paper are publicly available
within the COBRA toolbox \citep{cobraV2supp} via
\url{https://github.com/opencobra/cobratoolbox}.
The steps are as follows:
\begin{enumerate}

\item Install {\sc Matlab} version 8.4.0.150421 (R2014b) or above.
Earlier versions of {\sc Matlab} may also suffice, but have not been tested
for this purpose.

\item Install the latest version of The COBRA toolbox (more
recent than December 1, 2015). From a unix command line, enter the
command: 
\\
git clone https://github.com/opencobra/cobratoolbox.git

\item Optionally install a 64-bit Unix implementation of
LUSOL
\\ 
\url{http://stanford.edu/group/SOL/software/lusol/}.
\\
From a Unix command line, enter the command 
\\
git clone \url{https://github.com/nwh/lusol.git}.
\\
Installation is optional as otherwise the sparse LU factorization
provided in {\sc Matlab} is employed.

\item The folder \url{cobratoolbox/testing/testModels/modelCollectionFR}
contains each of the reconstructions in COBRA Toolbox format (one
{\sc Matlab} .mat file for each reconstruction). Each .mat file was derived
from the original SBML file that was published with the respective
papers or provided as published updates to the original SBML file,
as detailed within the function 
\\
\url{cobratoolbox/testing/testModels/modelCitations.m}.

\item All numerical results can be reproduced by calling
the {\sc Matlab} function
\\
\url{cobratoolbox/papers/Fleming/FR_2015/checkRankFRdriver.m}.
\\
This driver file passes each reconstruction to \url{checkRankFR.m}, which
generates the corresponding model as detailed in Section 
2.4.4
 and uses numerical linear algebra to check rank($[F,\,R]$),
as described in Section \ref{sub:Rank-Computation}.
\end{enumerate}

\section*{Supplementary references}
\footnotesize

\bibliographystyle{elsarticle-harv-nourl}


\end{document}